\newcommand\ZZ{\mathbb Z}
\newcommand\RR{\mathbb R}
\newcommand\CC{\mathbb C}
\newcommand\beq{\begin{equation}}
\newcommand\eeq{\end{equation}}
\newtheorem{theorem}{Theorem}
\newtheorem{lemma}{Lemma}
\newtheorem{remark}{Remark}
\newtheorem{proposition}{Proposition}
\begin{document}

\title{Transmission eigenvalues for multipoint scatterers\thanks{The work was partially supported by a joint grant of
the Russian Foundation for Basic Research and CNRS (projects no. RFBR 20-51-1500/PRC no. 2795 CNRS/RFBR).}}
\author{P.G. Grinevich
  \thanks{Steklov Mathematical Institute of RAS, 8 Gubkina St. Moscow,
119991, Russia; 
Landau Institute of Theoretical Physics, pr. Akademika Semenova 1a,
Chernogolovka, Moscow region, 142432, Russia;
Moscow State University, Moscow, Russia. e-mail: pgg@landau.ac.ru} 
\and R.G. Novikov\thanks
{CMAP, CNRS, \'Ecole Polytechnique,
Institut Polytechnique de Paris,
Palaiseau, France.
e-mail: novikov@cmap.polytechnique.fr}}
\date{}

\maketitle

\begin{abstract}
  We study the transmission eigenvalues for the multipoint scatterers of the Bethe-Peierls-Fermi-Zeldovich-Beresin-Faddeev type in dimensions $d=2$ and $d=3$. We show that for these scatterers: 1) each positive energy $E$ is a transmission eigenvalue (in the strong sense) of infinite multiplicity; 2) each complex $E$ is an interior transmission eigenvalue of infinite multiplicity. The case of dimension $d=1$ is also discussed.
\end{abstract}

\textbf{Mathematics Subject Classification (2020):} 35J10, 47A40, 47A75, 81U40.  

\textbf{Keywords:} Schr\"odinger equation, transparency, transmission eigenvalues, multipoint scatterers.

\section{Introduction}
Studies of transmission eigenvalues for Helmhotz-type equations have a long history, and many publications are devoted to this problem. In fact, the property that a spectral parameter $E$ is a transmission eigenvalue, is a weakened version of invisibility (transparency) for this $E$. In the case of the stationary Schr\"odinger equation the spectral parameter $E$ is interpreted as the energy.

In connection with transparency for the Schr\"odinger equation,  see \cite{CS,GRN3,G} and references therein. Historically, these studies go back to \cite{Red,New}. In connection with transmission eigenvalues for Helmhotz-type equations, see \cite{CH,NgNg,CN} and references therein. Historically, these studies go back to \cite{Kir,CM}.

In particular, it is known that for sufficiently regular compactly supported scatterers transmission eigenvalues are discrete and have finite multiplicity (see \cite{RS,NgNg,CN} for precise results). On the other hand, in \cite{GRN3} we constructed two-dimensional real-valued potentials from the Schwartz class, which are transparent at one positive energy. As a corollary, this energy is a transmission eigenvalue of infinite multiplicity in the strong sense, i.e., the number $1$ is an eigenvalue of infinite multiplicity for the scattering operator at this fixed energy! However, the potentials of \cite{GRN3} are not compactly supported.

In the present work we consider the stationary Schr\"odiger equation 
\beq
\label{eq:1}
-\Delta\psi + v(x)\psi = E\psi,  \ \ x\in\RR^d, \ \ d=1,2,3,
\eeq
with multipoint potential (scatterer) 
\beq
\label{eq:1.2}
v(x)=\sum\limits_{j=1}^{n} \varepsilon_j\delta(x-y_j).
\eeq

It is well-known that point scatterers $\varepsilon\delta(x)$ are well-defined only in dimensions $d=1,2,3$. If $d=1$, then $\delta(x)$ denotes the standard Dirac $\delta$-function. If $d=2$ or $d=3$, then $\varepsilon\delta(x)$ denotes a ``renormalization'' of the  $\delta$-function (depending on a parameter). These ``renormalized'' $\delta$-functions for $d=2,3$ are known as Bethe-Peierls-Fermi-Zeldovich-Beresin-Faddeev-type point scatterers. Exact definitions for the multipoint scatterers in dimensions $d=2,3$ can be found in \cite{AGHH,GRN4,GRN5}. Historically, mathematical definitions of point scatterers in dimension $d=3$ go back to  \cite{BP,BF}.

We also consider the equation 
\beq
\label{eq:1.1}
-\Delta\psi = E\psi,  \ \ x\in\RR^d, \ \ d=1,2,3.
\eeq

Let  ${\cal U}_j$ be an open non-empty neighbourhood of $y_j$ such that $y_{j'}\not\in {\cal U}_j$ if $j'\ne j$. 
Developing the approach, used in \cite{BP} for $d=3$, a local solution $\psi$ of (\ref{eq:1}), (\ref{eq:1.2}) in $U_j$ can be defined as a function such that:
\begin{enumerate}[(i)]
\item $\psi(x)$ satisfies   (\ref{eq:1.1}) in ${\cal U}_j\backslash y_j$;
\item If $d=1$, then $\psi(x)$ is continuous at $x=y_j$, and its first derivative has a jump
\beq\label{eq:1.4}  
-\alpha_j\left[\psi'(y_j+0)- \psi'(y_j-0)\right] = \psi(y_j);
\eeq
If $d=2$, then
\beq\label{eq:1.5}  
\psi(x) = \psi_{j,-1}\ln|x-y_j| + \psi_{j,0} + O(|x-y_j|) \ \ \mbox{as} \ \ x\rightarrow y_j,
\eeq
and
\beq\label{eq:1.6}  
\left[-2 \pi \alpha_j -\ln 2 + \gamma\right] \psi_{j,-1} = \psi_{j,0},
\eeq
where $\gamma=0.577\ldots$ is the Euler's constant.\\
If $d=3$, then
\beq\label{eq:1.7}  
\psi(x) = \frac{\psi_{j,-1}}{|x-y_j|}+ \psi_{j,0} + O(|x-y_j|)\ \ \mbox{as} \ \ x\rightarrow y_j,
\eeq
and
\beq\label{eq:1.8} 
4\pi \alpha_j \psi_{j,-1} = \psi_{j,0}.
\eeq
\end{enumerate}

Here, we assume that $\alpha_j\in\RR\cup\infty$. In addition, for each  $j=1,\ldots,n$, the strength $\varepsilon_j$ of the point scatterer $\varepsilon_j\delta(x-y_j)$ in (\ref{eq:1.2}) is encoded by $\alpha_j$, see, for example, \cite{AGHH,GRN5}. If $\alpha_j=\infty$, then $\epsilon_j=0$ (and the corresponding single point scatterer vanishes). If $d=1$, then the renormalized $\delta$-function coincides with the standard one, and we have that $\epsilon_j=-1/\alpha_j$. 

In the present work, we use, in particular, that the scattering functions for our multipoint scatterers are given by formulas (\ref{eq:2.6})--(\ref{eq:2.12}) recalled in Section~\ref{sect:2}.

In the present work we show that for the Schr\"odinger equation (\ref{eq:1}) with multipoint potentials (\ref{eq:1.2}), for $d=2,3$, each positive energy $E$ is a transmission eigenvalue of infinite multiplicity in the strong sense, and each complex $E$ is an interior transmission eigenvalue of infinite multiplicity. The related definitions of transmission eigenvalues are recalled in  Section~\ref{sect:2}. One can see that these multipoint potentials (\ref{eq:1.2}) are compactly supported, but they are singular. We also give proper analogs of these results for $d=1$ for single-point potentials.

In Section~\ref{sect:2} we also recall some basic facts from the scattering theory. In connection with inverse scattering for the multipoint scatterers of Bethe-Peierls-Fermi-Zeldovich-Beresin-Faddeev type see, for example, \cite{AN,BBMR,DR,GRN6,Nov}  and references therein.

The main results of the present work are presented in details in Section~\ref{sect:3}. The proofs are given in Section~\ref{sect:4}.

\section{Preliminaries}\label{sect:2}

For equation (\ref{eq:1}) with multipoint potentials as in (\ref{eq:1.2}) we consider the scattering eigenfunctions $\psi^+$ specified by the following asymptotics as $|x|\rightarrow\infty$:
\begin{equation}
\label{eq:3}
\psi^+= e^{ikx}+ f^+ \left(k,|k|\frac{x}{|x|}\right) \frac{e^{i|k||x|}}{|x|^{(d-1)/2}} +O\left(\frac{1}{|x|^{(d+1)/2}} \right), 
\end{equation}
$k\in\RR^d$, $k^2=E>0$, where $f^+=f^+(k,l)$, $k,l\in\RR^d$, $k^2=l^2=E$, is a priori unknown. The function $f^+$
arising in (\ref{eq:3}) is the scattering amplitude, or the far-field pattern. It is also convenient to present $f^+$ as follows:
\begin{align}
&f^+(k, l) = c(d, |k|)f(k, l), \label{f1}\\
&c(d, |k|) = -\pi i (-2\pi i)^{(d-1)/2}|k|^{(d-3)/2}, \, \text{ for } \sqrt{-2\pi i} = \sqrt{2\pi} e^{-i\pi/4}\label{c}.
\end{align}
 
It is important that for the multipoint scatterers the scattering eigenfunctions and scattering amplitudes are calculated explicitly (see, for example, \cite{AGHH,GRN4,GRN5}).

Let
\begin{equation}
G^+(x, E) := -(2\pi)^{-d} \int_{\mathbb{R}^d} \frac{e^{i\xi x}d\xi}{\xi^2-E -i\cdot 0} 
\end{equation}
where $x\in\RR^d, \ \ E\in\RR, \ \ E>0$. We recall that $G^+(x, E)$ is the Green function with the Sommerfeld radiation condition for the operator $\Delta+E$. Note also that 
\begin{align}
\begin{aligned}\label{G+}
&G^+(x, E) = \frac{e^{i|k||x|}}{2i|k|}, \, d=1,\\
&G^+(x, E) = -\frac{i}{4} H^1_0(|x||k|), \\
&\hphantom{aaa}  \mbox{where} \ H^1_0 \ \mbox{is the Hankel function of the first type,} \ d=2, \ \mbox{and}\\
&G^+(x, E) = \frac{1}{2\pi}\left[\ln|x| + \ln|k|-\ln 2 + \gamma - \frac{\pi i}{2}  \right] + O(|x|^2|\ln|x||), \ \ \mbox{as} \ \ |x|\rightarrow 0,\\
&\hphantom{aaa}  \mbox{where} \ \gamma=0.577\ldots \ \mbox{ is the Euler's constant}, \ \ \ \  {d=2,} \\
&G^+(x, k) = -\frac{e^{i|k||x|}}{4\pi|x|},\,\, d=3,
\end{aligned}
\end{align}
where $|k|=\sqrt{E}>0$.

Then the following formulas hold (see \cite{AGHH,GRN4,GRN5}):
\begin{equation}
\label{eq:2.6}
\psi^+(x,k)=e^{ikx}  +  \sum\limits_{j=1}^n q_j(k) G^+(x-y_j,|k|^2),
\eeq
\beq
\label{eq:2.7}
f(k,l) = \frac{1}{(2\pi)^d}  \sum\limits_{j=1}^n q_{j}(k) 
e^{-il y_j}.
\eeq
Here
$q(k)=(q_{1}(k),\ldots,q_{n}(k))^t$ satisfies the system of linear equations 
\begin{equation}
\label{eq:2.10}
A(|k|) q(k) = b(k),
\eeq
where $A$ is the $n\times n$ matrix with the elements
\begin{align}
&A_{j,j}(|k|) &= &\ \ \alpha_j - i(4\pi)^{-1}|k| , \ \ &d=3,\nonumber\\ 
\label{eq:2.11}
  &A_{j,j}(|k|) &= &\ \ \alpha_j - (4\pi)^{-1}(\pi i -2 \ln(|k|)), \ \ &d=2,\\
 &A_{j,j}(|k|) &= &\ \ \alpha_j + (2 i |k|)^{-1}, \ \ &d=1,\nonumber\\  
&A_{j,j'}(|k|) &= &\ \ G^{+}(y_j-y_{j'},|k|^2), \ \ &m\ne j,\nonumber
\end{align}
and $b(k)=(b_{1}(k),\ldots,b_{n}(k))^t$ is defined by
\begin{equation}
\label{eq:2.12}
b_{j}(k) = -e^{iky_j}.
\end{equation}  

In connection with (\ref{eq:1.4})-- (\ref{eq:1.8}) and (\ref{eq:2.6}) note also that the following formulas hold:
 \begin{align}
& &-\frac{1}{4\pi} q_j &= & &\psi_{j,-1}, \ \ &d=3,\nonumber\\ 
\label{eq:2.12.1}
&  &\frac{1}{2\pi} q_j &= & &\psi_{j,-1}, \ \ &d=2,\\
& &q_j                &= & &\psi'(y_j+0)-   \psi'(y_j-0),  \ \ &d=1.\nonumber
\end{align}

As we mentioned in Introduction, for each  $j=1,\ldots,n$, the strength $\varepsilon_j$ of the point scatterer $\varepsilon_j\delta(x-y_j)$ in (\ref{eq:1.2}) is encoded by a real parameter $\alpha_j$; see, also, for example, \cite{AGHH,GRN5}. 

The scattering operator $\hat S=\hat S_E$ at fixed energy $E=|k|^2$ can be defined as follows:
\beq
\label{eq:2.13}
(\hat S_E u)(\theta) =  u(\theta) -i\pi |k|^{d-2}  \int_{{\mathbb S}^{d-1}} f(|k|\theta', |k|\theta) u(\theta') d \theta',
\eeq
where ${\mathbb S}^{d-1}$  is the unit sphere in $\RR^d$, $\theta$, $\theta'\in {\mathbb S}^{d-1}$, $d\theta'$ denotes the standard volume element at ${\mathbb S}^{d-1}$, see, for example, \cite{LL}, Chapter XVII, \S~125, see also \cite{CH2}. 

Consider the equation
\begin{equation}
\label{eq:tr_eig}
(\hat S_E u) (\theta) = u (\theta), \ \  \theta\in {\mathbb S}^{d-1},
\end{equation}  
where $u\in L^2({\mathbb S}^{d-1})$. We say that energy $E$ is a \textbf{transmission eigenvalue in the strong sense} if equation (\ref{eq:tr_eig}) has a non-trivial solution.
Dimension of space of these solutions is the \textbf{multiplicity} of this transmission eigenvalue.

Assume that
\begin{equation}
\label{eq:supp}
\mbox{supp} \ \ v \subset \cal D.
\end{equation}
where $\cal D$ is a connected bounded domain in $\RR^d$ with $C^2$ boundary, where $\RR^d \backslash \overline{\cal D}$ is also connected.
Let
\begin{equation}
\label{eq:tr_eig1}
\psi(x) =  \int_{{\mathbb S}^{d-1}} \psi^+(x, |k|\theta') u(\theta') d \theta', 
\end{equation}
\begin{equation}
\label{eq:tr_eig2}
\phi(x) =  \int_{{\mathbb S}^{d-1}} e^{i|k|\theta' x} u(\theta') d \theta',
\end{equation}
where $u$ satisfies (\ref{eq:tr_eig}), $|k|^2=E$. Then
\begin{equation}
\label{eq:schr1}
-\Delta \psi(x) + v(x) \psi(x) = E  \psi(x), \ \ x\in{\cal D},
\end{equation}
\begin{equation}
\label{eq:schr2}
-\Delta \phi(x)  = E  \phi(x), \ \ x\in{\cal D},
\end{equation}
and
\begin{equation}
\label{eq:tr_eig3}
\psi(x)\equiv \phi(x), \ \ \frac{\partial}{\partial\nu}\psi(x)\equiv \frac{\partial}{\partial\nu}\phi(x) \ \ \mbox{for all} \ \ x\in\partial{\cal D},
\end{equation}
where $\frac{\partial}{\partial\nu}$ denotes the normal derivative, see, for example, \cite{CKP}.

In addition, $\phi\not\equiv 0$  on $\RR^d$ if $u\ne 0$ in $L^2({\mathbb S}^{d-1})$, and $|k|\ne0$.

The energy $E$ such that (\ref{eq:schr1}), (\ref{eq:schr2}), (\ref{eq:tr_eig3}) are fulfilled with non-trivial $\phi$, $\psi$ is called an
\textbf{interior transmission eigenvalue} for equation (\ref{eq:schr1}) in the domain $\cal D$; see \cite{CH2,CH,CKP,CM,Kir,NgNg,RS}.

\begin{remark}\label{rem:2.1} If $E\in\RR$, $E>0$, is a transmission eigenvalue of multiplicity $N$ in the strong sense for equation~(\ref{eq:1}), then $E$ is an interior transmission eigenvalue of multiplicity $\ge N$ for equation~(\ref{eq:1}) in any domain $\cal D$ as in (\ref{eq:supp}).

\end{remark}  

\section{Main results} 

Our results on transmission eigenvalues in the strong sense (in the sence of equation (\ref{eq:tr_eig})) for mutlipoint scatterers (\ref{eq:1.2}) are formulated in Theorem~\ref{thm:1} and in Proposition~\ref{prop:1}.
\label{sect:3}
\begin{theorem}\label{thm:1}
  Let $v$ be a multipoint scatterer (\ref{eq:1.2}) of the Bethe-Peierls-Fermi-Zeldovich-Beresin-Faddeev type in dimension $d=2$ or $d=3$.

  Then each energy $E>0$ is a transmission eigenvalue of infinite multiplicity for equation (\ref{eq:1}) in the strong sense.
\end{theorem}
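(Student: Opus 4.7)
The plan is to exploit the observation that for a multipoint scatterer the scattering amplitude (\ref{eq:2.7}), viewed as a function of the outgoing direction, always lies in the same fixed finite-dimensional subspace of $L^2(\mathbb{S}^{d-1})$; hence the operator $\hat S_E - I$ has finite rank at every positive energy, and its kernel is automatically infinite-dimensional.

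First, I would substitute the explicit expression (\ref{eq:2.7}) for $f$ into the definition (\ref{eq:2.13}) of $\hat S_E$. Setting $k = |k|\theta'$ and $l = |k|\theta$ and exchanging the sum and integral, one obtains
\begin{equation*}
((\hat S_E - I) u)(\theta) = -\frac{i\pi |k|^{d-2}}{(2\pi)^d} \sum_{j=1}^{n} e^{-i|k|\theta y_j}\, c_j(u), \qquad c_j(u) := \int_{\mathbb{S}^{d-1}} q_j(|k|\theta')\, u(\theta')\, d\theta'.
\end{equation*}
Consequently $\hat S_E - I$ maps $L^2(\mathbb{S}^{d-1})$ into the subspace
\begin{equation*}
V := \mathrm{span}\bigl\{\,\theta \mapsto e^{-i|k|\theta y_j}\ :\ j=1,\dots,n\,\bigr\} \subset L^2(\mathbb{S}^{d-1}),
\end{equation*}
whose dimension is at most $n$.

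Second, I would observe that the eigenvalue equation $(\hat S_E u)(\theta) = u(\theta)$ is therefore equivalent to the finite system of at most $n$ linear constraints $c_j(u)=0$, $j=1,\dots,n$, on $u \in L^2(\mathbb{S}^{d-1})$ (fewer if the exponentials happen to be linearly dependent). The simultaneous kernel of finitely many continuous linear functionals on an infinite-dimensional Hilbert space is infinite-dimensional, so the space of solutions of (\ref{eq:tr_eig}) has infinite dimension. By the multiplicity convention stated after (\ref{eq:tr_eig}), this gives the claimed infinite multiplicity.

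The main obstacle, such as it is, is really a bookkeeping point rather than a serious difficulty: one must ensure that $q_j(|k|\theta')$ is well defined for $\theta'\in\mathbb{S}^{d-1}$, i.e.\ that the matrix $A(|k|)$ in (\ref{eq:2.10})--(\ref{eq:2.11}) is invertible at the chosen value of $|k|$. Since $\det A(|k|)$ is a non-trivial real-analytic function of $|k|>0$, its zeros form a discrete set, and at the exceptional energies the conclusion follows either by continuity from nearby generic energies or by observing that any solution of $A(|k|)q(k) = b(k)$ produces a scattering amplitude of the form (\ref{eq:2.7}), so the finite-rank structure of $\hat S_E - I$ persists.
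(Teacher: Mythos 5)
Your proof is correct, and it reaches the conclusion by a route that differs from the paper's in one structural respect. The paper first invokes the reciprocity relation $f(k,l)=f(-l,-k)$ (derived from (\ref{eq:2.7}), (\ref{eq:2.10}) and $A=A^t$) to rewrite $f(k,l)=(2\pi)^{-d}\sum_j q_j(-l)e^{iky_j}$, so that the dependence on the integration variable $\theta'$ in (\ref{eq:2.13}) sits entirely in the exponentials; the sufficient conditions for $\hat S_E u=u$ then become the $n$ explicit constraints $\int_{{\mathbb S}^{d-1}}e^{i|k|\theta y_j}u(\theta)\,d\theta=0$ of (\ref{eq:4.3}). You instead keep (\ref{eq:2.7}) as written and note that the dependence on the \emph{output} variable $\theta$ sits in the exponentials $e^{-i|k|\theta y_j}$, so $\hat S_E-I$ has rank at most $n$ and its kernel, containing the common kernel of the functionals $c_j(u)=\int_{{\mathbb S}^{d-1}} q_j(|k|\theta')u(\theta')\,d\theta'$, has codimension at most $n$. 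Your version dispenses with reciprocity entirely (you only need $q_j(|k|\cdot)$ bounded on the sphere, which is clear since $q=A^{-1}b$ with $A$ depending only on $|k|$); the paper's version buys constraints of a more useful explicit form --- the condition (\ref{eq:4.3}) says precisely that the Herglotz function $\phi$ of (\ref{eq:tr_eig2}) vanishes at the points $y_j$, which is exactly the hypothesis of Lemma~\ref{lem:1} and ties Theorem~\ref{thm:1} to Theorem~\ref{thm:2}. As for your final caveat: for real $\alpha_j$ and $E>0$ the matrix $A(|k|)$ is in fact always invertible (its anti-Hermitian part is definite; e.g.\ for $d=3$ the imaginary parts form, up to a negative factor, the Gram matrix of the linearly independent functions $\omega\mapsto e^{i|k|\omega y_j}$ on ${\mathbb S}^{2}$), so no exceptional energies occur; in any case the paper, like you, simply takes formulas (\ref{eq:2.6})--(\ref{eq:2.12}) as given.
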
  
\begin{proposition}\label{prop:1}
  Let $v$ be a single point scatterer in dimension $d=1$ as in (\ref{eq:1.2}) with $n=1$.

  Then each energy $E>0$ is a transmission eigenvalue for equation (\ref{eq:1}) in the strong sense.
\end{proposition}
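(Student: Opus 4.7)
The plan is to exploit the extreme smallness of the one-dimensional unit sphere: $\mathbb{S}^0 = \{+1,-1\}$, so $L^2(\mathbb{S}^0) \cong \mathbb{C}^2$ and the integral in (\ref{eq:2.13}) reduces to a two-term sum. Consequently $\hat S_E$ is just a $2\times 2$ matrix, and the statement amounts to showing it has eigenvalue $1$.

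First I would compute $\hat S_E$ explicitly. For $n=1$, solving (\ref{eq:2.10})--(\ref{eq:2.12}) gives $q_1(k) = -e^{iky_1}/A_{11}(|k|)$ with $A_{11}(|k|) = \alpha_1 + (2i|k|)^{-1}$, so by (\ref{eq:2.7}),
\begin{equation*}
f(k,l) \;=\; -\,\frac{e^{i(k-l)y_1}}{2\pi A_{11}(|k|)}, \qquad k,l\in\{+|k|,-|k|\}.
\end{equation*}
Plugging $k=|k|\theta'$, $l=|k|\theta$ for $\theta,\theta'\in\{+1,-1\}$, the kernel $F_{\theta,\theta'} := f(|k|\theta',|k|\theta)$ becomes
\begin{equation*}
F \;=\; -\frac{1}{2\pi A_{11}(|k|)}
\begin{pmatrix} 1 & e^{-2i|k|y_1} \\ e^{2i|k|y_1} & 1 \end{pmatrix},
\end{equation*}
which is manifestly of rank one since both columns are proportional to $(1,\,e^{2i|k|y_1})^{t}$.

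The transmission eigenvalue condition (\ref{eq:tr_eig}) reads $Fu = 0$. Since $\det F = 0$ for every $E>0$ and every admissible $\alpha_1\in\mathbb R$, the null space is one-dimensional and is spanned by
\begin{equation*}
u \;=\; \bigl(e^{-i|k|y_1},\,-e^{i|k|y_1}\bigr)^{t},
\end{equation*}
which yields $\hat S_E u = u$ and proves that $E$ is a transmission eigenvalue in the strong sense. Because $\dim L^2(\mathbb S^0) = 2$, the multiplicity is at most $2$, and one reads off that it is exactly $1$ whenever $\alpha_1 \ne \infty$ (the trivial case $\alpha_1=\infty$ gives $\hat S_E = I$ and multiplicity $2$). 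There is no real obstacle here — the only point requiring care is consistency between the indexing convention $F_{\theta,\theta'}=f(|k|\theta',|k|\theta)$ dictated by (\ref{eq:2.13}) and the explicit form of $f$; once this is set up, the rank-one structure makes everything automatic.
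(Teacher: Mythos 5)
Your proof is correct and takes essentially the same approach as the paper: it reduces the $d=1$, $n=1$ case to the explicit rank-one kernel $f(k,l)=\mathrm{const}\cdot e^{i(k-l)y_1}/A_{11}(|k|)$ and identifies its one-dimensional null space, which is exactly the paper's condition (\ref{eq:4.5}). The only (immaterial) difference is an overall sign: the minus sign you obtain for $f$ is what follows from (\ref{eq:2.7})--(\ref{eq:2.12}), while the paper's formula (\ref{eq:4.4}) writes it with a plus; this does not affect the null-space argument.
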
  

To prove our results on interior transmission eigenvalues for  mutlipoint scatterers (\ref{eq:1.2}) we use the following lemma, which is also of independent interest.

\begin{lemma}\label{lem:1}
Let $v$ be a miltipoint scatterer as in (\ref{eq:1.2}), (\ref{eq:supp}),  $\phi$ satisfy (\ref{eq:schr2}), and $\phi(y_j)=0$ for all $j=1,\ldots,n$. Then $\phi$ also satisfies (\ref{eq:schr1}) . 
\end{lemma}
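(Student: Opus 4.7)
The plan is to verify directly, for each $j=1,\ldots,n$, that $\phi$ satisfies the defining local conditions (i)--(iii) from Section~1 in a neighbourhood ${\cal U}_j$ of $y_j$. Once these are checked, $\phi$ is by definition a solution of the multipoint Schr\"odinger equation~(\ref{eq:schr1}).

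Condition (i) is immediate: since $\phi$ solves $-\Delta\phi=E\phi$ on all of $\cal D$, it automatically solves (\ref{eq:1.1}) on ${\cal U}_j\setminus\{y_j\}$. The substantive step is condition (ii)/(iii). Here I would exploit the fact that any solution of the Helmholtz equation (\ref{eq:schr2}) on $\cal D$ is real-analytic, hence in particular smooth at each $y_j\in{\cal D}$, and admits a Taylor expansion
\[
\phi(x) = \phi(y_j) + O(|x-y_j|)\quad\mbox{as } x\to y_j .
\]
Comparing with the prescribed asymptotic forms (\ref{eq:1.5}) for $d=2$ and (\ref{eq:1.7}) for $d=3$, this identifies the coefficients as
\[
\psi_{j,-1}=0, \qquad \psi_{j,0}=\phi(y_j),
\]
and in the case $d=1$ it gives continuity of $\phi$ at $y_j$ with $\phi'(y_j+0)-\phi'(y_j-0)=0$.

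Plugging these values into the matching conditions (\ref{eq:1.4}), (\ref{eq:1.6}), (\ref{eq:1.8}), in every dimension the left-hand side vanishes (as a finite multiple of $\psi_{j,-1}=0$ or of the derivative jump, and with the convention that $\alpha_j=\infty$ is read as $\psi_{j,-1}=0$, which already holds), while the right-hand side equals $\phi(y_j)$. Thus the matching condition reduces in every case to the single requirement $\phi(y_j)=0$, which is precisely the hypothesis of the lemma. Hence $\phi$ satisfies (i)--(iii) at every $y_j$, so $\phi$ is a solution of (\ref{eq:schr1}).

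There is no real obstacle: the proof is essentially a definition-chase, and the only point worth articulating carefully is that elliptic regularity forces the singular coefficients $\psi_{j,-1}$ (or the derivative jump, if $d=1$) of a classical Helmholtz solution to be zero at each interior point, which makes the matching conditions collapse to the condition $\phi(y_j)=0$.
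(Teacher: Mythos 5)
Your proof is correct and follows essentially the same route as the paper's: both verify directly that a smooth Helmholtz solution vanishing at each $y_j$ has $\psi_{j,-1}=0$ (or zero derivative jump for $d=1$) and $\psi_{j,0}=\phi(y_j)=0$, so the matching conditions (\ref{eq:1.4}), (\ref{eq:1.6}), (\ref{eq:1.8}) hold trivially. Your version merely spells out the elliptic-regularity justification and the $\alpha_j=\infty$ convention that the paper leaves implicit.
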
 
In Lemma~\ref{lem:1} we assume that in equations  (\ref{eq:schr1}), (\ref{eq:schr2}) the energy $E\in\CC$.     

Our results on interior transmission eigenvalues (in the sense of (\ref{eq:schr1})--(\ref{eq:tr_eig3}))  for mutlipoint scatterers (\ref{eq:1.2}) are formulated in Theorem~\ref{thm:2} and in Proposition~\ref{prop:2}.

\begin{theorem}\label{thm:2}
  Let $v$ be a multipoint scatterer as in (\ref{eq:1.2}) of the Bethe-Peierls-Fermi-Zeldovich-Beresin-Faddeev type in dimension $d=2$ or $d=3$, satisfying condition (\ref{eq:supp}). 

  Then each energy $E\in\CC$ is an interior transmission eigenvalue of infinite multiplicity for equation (\ref{eq:1}) in the domain $\cal D$.
\end{theorem}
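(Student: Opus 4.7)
The plan is to invoke Lemma~\ref{lem:1}: set $\psi\equiv\phi$ on $\mathcal{D}$ so that the matching conditions (\ref{eq:tr_eig3}) hold trivially, (\ref{eq:schr2}) holds by the choice of $\phi$, and (\ref{eq:schr1}) reduces to the condition that $\phi$ be a Helmholtz solution vanishing at $y_1,\ldots,y_n$. The theorem then follows from exhibiting, for every $E\in\CC$, an infinite-dimensional space of solutions $\phi$ of $-\Delta\phi=E\phi$ in $\mathcal{D}$ with $\phi(y_j)=0$ for all $j$.

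Fix $E\in\CC$ and let $S_E$ denote the space of smooth solutions of $-\Delta\phi=E\phi$ on $\mathcal{D}$. This space is infinite-dimensional: for $d=2$ one can take the entire Helmholtz solutions $\phi_m(x)=J_m(\sqrt{E}\,r)\,e^{im\theta}$, $m\in\ZZ$, in polar coordinates $(r,\theta)$ centered at some fixed $x_0\in\RR^d$, and for $d=3$ the analogous family $j_\ell(\sqrt{E}\,r)\,Y_{\ell,m}(\theta,\varphi)$ built from spherical Bessel functions and spherical harmonics; restricting these to $\mathcal{D}$ gives linearly independent elements of $S_E$. The evaluation map $\phi\mapsto(\phi(y_1),\ldots,\phi(y_n))\in\CC^n$ then cuts $S_E$ by at most $n$ linear conditions, so
\[
V_E:=\{\phi\in S_E:\phi(y_j)=0,\ j=1,\ldots,n\}
\]
remains infinite-dimensional.

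For every $\phi\in V_E$, set $\psi:=\phi$ on $\mathcal{D}$. By Lemma~\ref{lem:1}, $\psi$ satisfies the multipoint Schr\"odinger equation (\ref{eq:schr1}): away from the $y_j$ this is just Helmholtz, while at each $y_j$ the smoothness of $\phi$ combined with $\phi(y_j)=0$ forces the coefficients $\psi_{j,-1}$ and $\psi_{j,0}$ in (\ref{eq:1.5}) or (\ref{eq:1.7}) to vanish, so the jump conditions (\ref{eq:1.6}) and (\ref{eq:1.8}) hold trivially. Since $\phi$ is not identically zero on the connected open set $\mathcal{D}$, the pair $(\psi,\phi)$ is non-trivial; the multiplicity of $E$ as an interior transmission eigenvalue is therefore at least $\dim V_E=\infty$.

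The only step that requires genuine input is the infinite-dimensionality of $S_E$, which is the potential obstacle. Once Lemma~\ref{lem:1} is available, this is handled uniformly in $E\in\CC$ by the entire separation-of-variables families indicated above, and the geometry of $\mathcal{D}$ plays no role beyond containing the points $y_j$.
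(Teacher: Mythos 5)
Your proposal is correct and follows essentially the same route as the paper: both reduce the claim to Lemma~\ref{lem:1} by taking $\psi\equiv\phi$ with $\phi$ a Helmholtz solution vanishing at the points $y_j$, and both obtain infinite multiplicity by noting that the $n$ evaluation conditions cut an infinite-dimensional space of solutions of (\ref{eq:schr2}) down to a space that is still infinite-dimensional. The only difference is cosmetic: the paper phrases this via finite linear systems of size $N>n$ with solution spaces of dimension $\ge N-n$, while you work directly with the evaluation map and supply an explicit Bessel/spherical-harmonic family that the paper leaves implicit.
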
  

\begin{proposition}\label{prop:2}
 Let $v$ be a single point scatterer in dimension $d=1$ as in (\ref{eq:1.2}) with $n=1$, satisfying condition (\ref{eq:supp}).

  Then each energy $E\in\CC$ is an interior transmission eigenvalue for equation (\ref{eq:1}) in the domain $\cal D$.
\end{proposition}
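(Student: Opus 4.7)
The plan is to apply Lemma~\ref{lem:1} with $\psi=\phi$, reducing the problem to producing a free eigenfunction on $\cal D$ that vanishes at the single scatterer point $y_1$.

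First I would note that in dimension $d=1$ the space of classical solutions of $-\phi''=E\phi$ on $\cal D$ is two-dimensional for every $E\in\CC$, so the single linear condition $\phi(y_1)=0$ still leaves a one-dimensional subspace of candidates. An explicit non-trivial choice is
\[
\phi(x) := \sin\bigl(\sqrt{E}\,(x-y_1)\bigr) \quad (E\neq 0),\qquad \phi(x):=x-y_1\quad (E=0),
\]
for any branch of $\sqrt{E}$; since $y_1\in\cal D$ by (\ref{eq:supp}) and the zeros of these functions are isolated, $\phi\not\equiv 0$ on $\cal D$.

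Next I would appeal to Lemma~\ref{lem:1}: the $\phi$ just produced is a smooth solution of the free Helmholtz equation with $\phi(y_1)=0$, so the jump of $\phi'$ at $y_1$ vanishes and the point-scatterer boundary condition (\ref{eq:1.4}) is trivially verified with both sides equal to zero. Thus $\phi$ also satisfies the Schr\"odinger equation (\ref{eq:schr1}) on $\cal D$. Setting $\psi:=\phi$, the matching conditions (\ref{eq:tr_eig3}) on $\partial\cal D$ hold automatically, yielding a non-trivial pair $(\phi,\psi)$ for which (\ref{eq:schr1})--(\ref{eq:tr_eig3}) are satisfied for the chosen $E$.

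There is no substantive obstacle here: this is just the $d=1$ analogue of the argument that proves Theorem~\ref{thm:2}. The qualitative difference worth flagging is that in $d=1$ the two-dimensional free solution space combined with the single linear constraint at $y_1$ produces only a one-dimensional family of such interior transmission pairs, which is why Proposition~\ref{prop:2} is stated without the ``infinite multiplicity'' clause appearing in Theorem~\ref{thm:2}.
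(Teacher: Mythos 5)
Your proposal is correct and follows essentially the same route as the paper: take a non-trivial free solution $\Phi$ of (\ref{eq:schr2}) vanishing at $y_1$, apply Lemma~\ref{lem:1}, and set $\psi\equiv\phi\equiv\Phi$; the paper merely asserts the existence of such a $\Phi$ where you write it out explicitly, and it additionally dispatches the case $E>0$ via Remark~\ref{rem:2.1} and Proposition~\ref{prop:1}, which your uniform argument covers anyway.
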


In Theorem~\ref{thm:2} and Proposition~\ref{prop:2}, $\cal D$ is the domain of formulas (\ref{eq:supp}), (\ref{eq:schr1})--(\ref{eq:tr_eig3}).

\section{Proofs of the main results}\label{sect:4}

\begin{proof}[Proof of Theorem~\ref{thm:1}]
  It is convenient to use that
  \begin{equation}\label{eq:rec}
    f(k,l) = f(-l,-k), \ \ \mbox{where} \ \ k,l\in\RR^d, \ k^2=l^2=E.
  \end{equation}
The reciprocity property (\ref{eq:rec}) follows from formula (\ref{eq:2.7}), equation (\ref{eq:2.10}) and the property that $A=A^t$. 

Due to (\ref{eq:2.7}), (\ref{eq:rec}), we have that 
\beq
\label{eq:4.2}
f(k,l) = \frac{1}{(2\pi)^d}  \sum\limits_{j=1}^n q_{j}(-l) 
e^{ik y_j}.
\eeq

Due to (\ref{eq:2.13}), (\ref{eq:4.2}), we have that equation (\ref{eq:tr_eig}) is fulfilled for each $u\in L^2({\mathbb S}^{d-1})$ such that
\beq
\label{eq:4.3}
\int_{{\mathbb S}^{d-1}} e^{i |k| \theta  y_j} u(\theta) d \theta =0, \ \ \mbox{for all} \ \ j=1,\ldots,n.
\eeq
One can see that a finite system of homogeneous linear equations (\ref{eq:4.3}) in $L^2({\mathbb S}^{d-1})$ has infinite-dimensional space of solutions for $d=2,3$.

Therefore $E$ is a transmission eigenvalue of infinite multiplicity in the strong sense for equation (\ref{eq:1}). 
\end{proof}

\begin{proof}[Proof of Proposition~\ref{prop:1}]
It is convenient to use that, for $d=1$, $n=1$:
\beq
\label{eq:4.4}
f(k,l) = \frac{1}{2\pi}   \frac{e^{ ik y_1}  e^{-il y_1}}{\alpha_1 + (2i |k|)^{-1}}, \ \ \mbox{where} \ \ k,l\in\RR, \ \ k^2=l^2=E.
\eeq
Due to (\ref{eq:4.4}), we have that equation (\ref{eq:tr_eig}) is fulfilled for each $u=(u^-,u^+)\in L^2({\mathbb S}^0)=\CC^2$ such that 
\beq
\label{eq:4.5}
 e^{-i |k| y_1}  u^- +  e^{i |k| y_1}  u^+ =0.
\eeq

One can see that equation  (\ref{eq:4.5}) has one-dimensional space of solutions.

Therefore $E$ is a transmission eigenvalue in the strong sense for equation (\ref{eq:1}).
\end{proof}

\begin{proof}[Proof of Lemma~\ref{lem:1}]
  This result follows from the definition of solutions of equation (\ref{eq:1}); see Introduction. Indeed, equation  (\ref{eq:1.1}) is fulfilled for $\psi=\phi$ in ${\cal D}$ containing all $y_j$, and we have that:
  \begin{enumerate}
  \item If $d=1$, then
    $$
    \phi(y_j)=0, \ \ \phi'(y_j+0)= \phi'(y_j-0),
    $$
    and condition (\ref{eq:1.4}) is fulfilled;
  \item If $d=2,3$, then  
    $$
    \phi_{j,-1}=\phi_{j,0}=0,
    $$
    and condition (\ref{eq:1.6}), for $d=2$, or condition  (\ref{eq:1.8}), for $d=3$, is fulfilled.
\end{enumerate}
\end{proof}

\begin{proof}[Proof of Theorem~\ref{thm:2}]
  For $E\in\RR$, $E>0$, the result follows from  Remark~\ref{rem:2.1} and Theorem~\ref{thm:1}.

  For $E\in\CC$, the proof is as follows. Let $\phi_l$, $l=1,\ldots,\infty$, denote an infinite set of linearly independent smooth solutions of (\ref{eq:schr2}) in $\cal D\cup\partial{\cal D}$. For any $N\in\ZZ$, $N>n$, consider the linear system on $z=(z_1,\ldots,z_N)\in\CC^N$:
\beq\label{eq:4.6}
 \sum\limits_{l=1}^N z_l \phi_l(y_j) =0.
\eeq
This system has at least $N-n$ dimensional space of solutions. For each $z=(z_1,\ldots,z_N)$ satisfying (\ref{eq:4.6}) the function
$$
\Phi(x) = \sum\limits_{l=1}^N z_l \phi_l(x)
$$
satisfies (\ref{eq:schr2}), and $\Phi(y_j)=0$ for $j=1,\ldots,n$. Therefore, by Lemma~\ref{lem:1}, $\Phi(x)$ satisfies  (\ref{eq:schr1}) and (\ref{eq:schr2}) simultaneously! In addition, $\Phi(x)\not\equiv 0$ in $\cal D$, due to linear independence of $\phi_l$. Thus, $E$ is an interior transmission eigenvalue for equation (\ref{eq:schr1}) in the sense of (\ref{eq:schr1})--(\ref{eq:tr_eig3}), where $\psi\equiv\phi\equiv\Phi$ in ${\cal D}\cup\partial{\cal D}$. In addition, the space of such $\Phi$ has dimension $\ge N-n$. Therefore, $E$ is an interior transmission eigenvalue of multiplicity, at least, $N-n$ for arbitrary large $N$, i.e. the multiplicity of $E$ is infinite. 
\end{proof}

\begin{proof}[Proof of Proposition~\ref{prop:2}]
 For $E\in\RR$, $E>0$, the result follows from  Remark~\ref{rem:2.1} and Proposition~\ref{prop:1}.

 For $E\in\CC$, the proof is as follows. Denote by $\Phi$ a non-zero solution of (\ref{eq:schr2}) such that $\Phi(y_1)=0$. By Lemma~\ref{lem:1}, $\Phi(x)$ satisfies  (\ref{eq:schr1}) and (\ref{eq:schr2}) simultaneously. Therefore, $E$ is an interior transmission eigenvalue for equation (\ref{eq:schr1}) in the sense of (\ref{eq:schr1})--(\ref{eq:tr_eig3}), where $\psi\equiv\phi\equiv\Phi$ in ${\cal D}\cup\partial{\cal D}$.
 
 \end{proof}

\end{document}